\documentclass[11pt]{article}
\usepackage[a4paper,margin=0.76in]{geometry}
\usepackage{graphicx,booktabs,amsmath,amssymb,amsthm,hyperref,xcolor,url,float,caption,enumitem,array}
\hypersetup{colorlinks=true,linkcolor=blue,citecolor=blue,urlcolor=blue}
\newtheorem{theorem}{Theorem}
\newtheorem{assumption}{Assumption}
\newtheorem{remark}{Remark}

\title{Adaptive Bag of Little Bootstraps for Nonsmooth Robust Inference:\
An AMSE-Stability Framework with Software Validation and Reproducible Cholera Surveillance Illustration from the Democratic Republic of the Congo}

\author{Cosmas KAMBALE MUSAMBI\textsuperscript{1}, YODA Amidou\textsuperscript{1}, Maiga Mahafouz\textsuperscript{1},\\
NDILMBAYE DINGAMHOUDOU Josue\textsuperscript{1}, Niyukuri Fannick\textsuperscript{1}, and Ernest Fokou\'e\textsuperscript{2}}

\date{\textsuperscript{1}Department of Data Science and Mathematical Engineering, Faculty of Science, University of Dschang, Dschang, Cameroon\\
\textsuperscript{2}School of Mathematics and Statistics, College of Science, Rochester Institute of Technology, Rochester, NY, USA\\
June 2026\\[0.45em]
\colorbox{yellow}{\parbox{0.92\linewidth}{\centering\textcolor{blue}{Corresponding author: Cosmas KAMBALE MUSAMBI; Email: \href{mailto:ckambale86@gmail.com}{ckambale86@gmail.com}; ORCID: \href{https://orcid.org/0009-0000-9043-6470}{0009-0000-9043-6470}. Professor Ernest Fokou\'e: \href{mailto:epfeqa@rit.edu}{epfeqa@rit.edu}.}}}}

\begin{document}
\maketitle

\begin{abstract}
Bootstrap inference is a cornerstone of computational statistics, but classical bootstrap and BCa intervals can be unstable for nonsmooth estimators such as empirical quantiles, sample maxima, eigenvalue ratios and maximum correlations. The m-out-of-n bootstrap can reduce some failure modes, but it is slow and highly sensitive to the chosen subsampling size. This arXiv-ready version develops an adaptive Bag of Little Bootstraps (BLB) procedure in which the exponent $\gamma$ in $m=n^\gamma$ is selected by a data-driven criterion combining a plug-in asymptotic mean squared error (AMSE) principle with a stability-risk proxy for the BLB variance estimator. The Python package \texttt{robustboot} implements the method, BCa comparators, nonsmooth statistics, tests and examples for quantile-style inference, eigenvalue-ratio inference and maximum-correlation screening. Relative to the earlier manuscript, this version adds an explicit algorithm, assumptions, a consistency theorem on a finite candidate grid, expanded Monte Carlo validation across sample sizes, sensitivity diagnostics for the selected exponent, and a clearer data statement for the DRC cholera illustration. Simulation results show that adaptive BLB improves coverage relative to ordinary bootstrap while preserving computational economy. A curated aggregate DRC cholera surveillance demonstration illustrates robust uncertainty quantification for public-health thresholds without overstating reconstructed data.
\end{abstract}

\textbf{Keywords:} adaptive bootstrap; Bag of Little Bootstraps; BCa intervals; nonsmooth estimators; AMSE; robust inference; quantiles; eigenvalue ratios; maximum correlations; cholera surveillance; Python package.

\section{Introduction}
The bootstrap replaces difficult analytic variance calculations by repeated resampling from the empirical distribution. In smooth settings, this idea often gives accurate standard errors and confidence intervals. However, many estimators used in modern statistical learning are not smooth functions of the empirical distribution: high quantiles, sample maxima, selected correlations, eigenvalue ratios, active-set statistics and surveillance thresholds. For these targets, small empirical perturbations can change the active order statistic, active eigenvector or active maximum, producing unstable uncertainty estimates.

The Project 1 requirement for Group 1 was to implement a robust bootstrap package for nonsmooth estimators, extend BLB with an adaptive subsampling scheme, provide theoretical heuristics plus simulation validation, and document examples for quantile regression, eigenvalue ratio and maximum correlation. The contribution of this paper is therefore deliberately focused: it provides a clean, auditable, software-backed and simulation-validated extension of BLB.

The central tuning parameter is the BLB exponent $\gamma$ in
\[
\widehat{\theta}^{*}_{b}=t(X^{*b}_{m}),\qquad m=n^\gamma,\qquad \gamma\in(0,1).
\]
Instead of fixing $\gamma$ manually, we select it from a grid by minimizing an empirical approximation of
\[
\widehat{\gamma}_{opt}=\arg\min_{\gamma\in\Gamma}\widehat{\mathrm{MSE}}\left\{\widehat{\mathrm{Var}}_{BLB,\gamma}(\widehat\theta)\right\}.
\]
This article combines the best elements of two manuscript drafts: the package-level AMSE selector and simulation figures from the principal article, together with the stability-risk formulation and richer methodological discussion from the computational-statistics article. The revised version also adds a theorem, proof sketch and sensitivity analysis so that the manuscript is stronger for preprint dissemination and future journal submission.

\section{Why nonsmooth bootstrap is difficult}
Let $X_1,\ldots,X_n\sim F$ and let $\widehat\theta_n=t(F_n)$ be a statistic. For smooth functionals, one often has
\[
t(F_n)-t(F)=\frac{1}{n}\sum_{i=1}^n\psi(X_i)+o_p(n^{-1/2}),
\]
which supports ordinary bootstrap approximation. For nonsmooth $t$, this expansion may be directional, discontinuous at active-set boundaries, or invalid near ties and kinks. Examples include
\[
t_q(X)=\inf\{x:F_n(x)\ge q\},\qquad t_{max}(X)=\max_i X_i,
\]
and multivariate functionals such as covariance eigenvalue ratios and maximum absolute correlations.

BCa intervals correct for bias and acceleration, but the jackknife acceleration can itself become unstable for nonsmooth targets. The m-out-of-n bootstrap can reduce the discreteness of ordinary resampling, yet it depends strongly on $m$. BLB is attractive because it draws little bags of size $m=n^\gamma$ and then uses multinomial weights summing to $n$ within each bag. The remaining issue is choosing $\gamma$ in a transparent and reproducible way.

\section{Adaptive BLB framework}
For a candidate exponent $\gamma\in\Gamma$, set $m_\gamma=\lceil n^\gamma\rceil$. For each little bag $b=1,\ldots,B$, draw a subset $S_b$ of size $m_\gamma$. Conditional on $S_b$, draw multinomial weights
\[
(W^{(r)}_{b1},\ldots,W^{(r)}_{bm})\sim \mathrm{Multinomial}\left(n;\frac{1}{m},\ldots,\frac{1}{m}\right),\qquad r=1,\ldots,R.
\]
The weighted statistic is denoted $\widehat\theta^{*}_{b,r}(\gamma)$. The little-bag variance and BLB variance estimates are
\[
\widehat V_b(\gamma)=\frac{1}{R-1}\sum_{r=1}^{R}\left(\widehat\theta^{*}_{b,r}(\gamma)-\overline\theta^{*}_{b}(\gamma)\right)^2,
\qquad
\widehat V_{BLB}(\gamma)=\frac{1}{B}\sum_{b=1}^{B}\widehat V_b(\gamma).
\]

\subsection{AMSE-stability selector}
The ideal target is the risk
\[
\gamma^*=\arg\min_{\gamma\in\Gamma}\mathbb E\left[\left\{\widehat V_{BLB}(\gamma)-V(\widehat\theta_n)\right\}^2\right].
\]
Since $V(\widehat\theta_n)$ is unknown, the implementation uses a plug-in AMSE score
\[
\widehat Q(\gamma)=\left(\widehat V_{BLB}(\gamma)-\mathrm{median}_{g\in\Gamma}\widehat V_{BLB}(g)\right)^2
+\widehat V_{BLB}(\gamma)^2n^{-\gamma}+\lambda n^\gamma/n^2.
\]
The first term measures instability across the grid; the second penalizes finite-bag uncertainty; the third is a light computational regularizer. To make the criterion more interpretable for nonsmooth targets, we also report the stability-risk proxy
\[
\widehat S(\gamma)=\mathrm{Var}_b\{\widehat V_b(\gamma)\}+m_\gamma^{-\alpha},\qquad \alpha>0,
\]
where the first component detects between-bag instability and the second prevents selection of unrealistically small bags. The selected exponent is
\[
\widehat\gamma=\arg\min_{\gamma\in\Gamma}\widehat Q(\gamma),
\]
with $\widehat S(\gamma)$ retained as a diagnostic table for auditability.

\subsection{Algorithm}
\begin{center}
\fbox{\begin{minipage}{0.94\linewidth}
\textbf{Algorithm 1: Adaptive BLB for nonsmooth inference}\
\textbf{Input:} data $X_{1:n}$, statistic $t(\cdot)$, grid $\Gamma$, number of little bags $B$, inner resamples $R$, AMSE penalty $\lambda$, stability penalty $\alpha$.\
\textbf{Output:} selected exponent $\widehat\gamma$, variance estimate, standard error, confidence interval, score table.\
1. For each $\gamma\in\Gamma$, compute $m_\gamma=\lceil n^\gamma\rceil$.\
2. For $b=1,\ldots,B$, draw little bag $S_b$ of size $m_\gamma$.\
3. For $r=1,\ldots,R$, draw multinomial weights summing to $n$ and compute $\widehat\theta^{*}_{b,r}(\gamma)$.\
4. Compute $\widehat V_b(\gamma)$, $\widehat V_{BLB}(\gamma)$, $\widehat Q(\gamma)$ and $\widehat S(\gamma)$.\
5. Select $\widehat\gamma=\arg\min_{\gamma\in\Gamma}\widehat Q(\gamma)$.\
6. Recompute or retain the final BLB distribution at $\widehat\gamma$ and return interval summaries.
\end{minipage}}
\end{center}

\section{Heuristic theory and grid consistency}
The following result is intentionally stated for a finite grid. It does not claim universal bootstrap validity for every nonsmooth functional; rather, it formalizes the selection target used by the software and simulations.

\begin{assumption}[Finite-grid AMSE regularity]
Let $\Gamma$ be finite. For each $\gamma\in\Gamma$, assume: (i) $\widehat V_{BLB}(\gamma)$ has finite second moment; (ii) the empirical score $\widehat Q(\gamma)$ converges uniformly in probability to a deterministic score $Q_0(\gamma)$; (iii) the minimizer $\gamma_0=\arg\min_{\gamma\in\Gamma}Q_0(\gamma)$ is unique.
\end{assumption}

\begin{theorem}[Finite-grid consistency of adaptive exponent selection]
Under Assumption 1,
\[
\mathbb P(\widehat\gamma=\gamma_0)\longrightarrow 1.
\]
If $Q_0(\gamma)$ is an upper envelope or consistent surrogate for the AMSE risk of $\widehat V_{BLB}(\gamma)$ over $\Gamma$, then the selected exponent is asymptotically equivalent, on the grid, to the AMSE-optimal exponent targeted by the surrogate.
\end{theorem}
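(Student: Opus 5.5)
The plan is the standard argmin-consistency argument for a finite index set, with a separation margin $\delta$ coming from uniqueness in Assumption 1(iii). Because $\Gamma$ is finite and $\gamma_0$ is the unique minimizer of $Q_0$, the quantity
\[
\delta=\min_{\gamma\in\Gamma,\ \gamma\neq\gamma_0}\bigl\{Q_0(\gamma)-Q_0(\gamma_0)\bigr\}
\]
is strictly positive. A minimum over finitely many positive numbers is positive, so no continuity or compactness argument is needed. This is the only place where finiteness of $\Gamma$ enters in an essential way.

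Next define the event
\[
E_n=\Bigl\{\max_{\gamma\in\Gamma}|\widehat Q(\gamma)-Q_0(\gamma)|<\delta/2\Bigr\}.
\]
By Assumption 1(ii), $\mathbb P(E_n)\to1$. On a finite grid, uniform convergence is equivalent to pointwise convergence, via a union bound over $|\Gamma|$ terms. On $E_n$, for every $\gamma\neq\gamma_0$,
\[
\widehat Q(\gamma)>Q_0(\gamma)-\delta/2\ge Q_0(\gamma_0)+\delta/2>\widehat Q(\gamma_0).
\]
So $\gamma_0$ is the strict unique minimizer of $\widehat Q$, and the selector returns $\widehat\gamma=\gamma_0$. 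Any tie-breaking rule in the software is irrelevant on this event. Hence $\mathbb P(\widehat\gamma=\gamma_0)\ge\mathbb P(E_n)\to1$. Assumption 1(i) is not used for this part. It only guarantees that the AMSE risk $\mathbb E[\{\widehat V_{BLB}(\gamma)-V(\widehat\theta_n)\}^2]$ appearing in the second claim is finite and well defined.

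For the second claim, I would state the meaning of ``consistent surrogate'' precisely. One reading: the (suitably normalized) AMSE risk $A_n(\gamma)$ converges to a limit $A_0(\gamma)$, and $A_0$ and $Q_0$ have the same minimizer on $\Gamma$. In that case $\gamma^*$, the minimizer of $A_n$, equals $\gamma_0$ for all large $n$, by the same separation argument applied deterministically. It follows that $\mathbb P(\widehat\gamma=\gamma^*)\to1$.

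For the ``upper envelope'' reading, the claim is weaker. Suppose $A_0\le Q_0$ with equality at $\gamma_0$, where $A_0$ is again the limit of the normalized risk $A_n$ as in the surrogate reading. Then on $E_n$ one gets only $A_0(\widehat\gamma)\le Q_0(\gamma_0)$. This guarantees that the selected exponent attains the best certified risk bound on the grid, not that it is the exact AMSE minimizer.

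I expect the main obstacle to be this second part, not the first. The statement is qualitative, and the conclusion depends on formalizing ``asymptotically equivalent.'' I would first adopt the identical-limiting-minimizer definition, which reduces the claim directly to the first part. I would then remark that the envelope case yields only oracle-bound optimality. I would also note that verifying Assumption 1(ii) for the specific score $\widehat Q$ is not attempted. That would require the median term and $\widehat V_{BLB}(\gamma)$ to converge after normalization by the rate of $V(\widehat\theta_n)$, which is exactly the nonsmooth difficulty the theorem deliberately assumes away.
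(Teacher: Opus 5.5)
Your argument for $\mathbb P(\widehat\gamma=\gamma_0)\to1$ is correct and matches the paper's proof sketch: finiteness of the grid plus uniqueness gives a positive separation margin (your $\delta$, the paper's $\eta$); uniform convergence holds on an event of probability tending to one; and the half-margin comparison forces $\widehat Q(\gamma_0)<\widehat Q(\gamma)$ for every $\gamma\neq\gamma_0$. The paper's sketch does not address the theorem's second sentence at all, so your formalization of ``consistent surrogate'' goes beyond what the paper proves, as does your observation that the upper-envelope reading only yields minimization of the certified risk bound.
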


\textit{Proof sketch.} Since $\Gamma$ is finite, uniform convergence follows from pointwise convergence plus a union bound. For any $\epsilon>0$, with probability tending to one, $\sup_{\gamma\in\Gamma}|\widehat Q(\gamma)-Q_0(\gamma)|<\epsilon$. Let $\eta=\min_{\gamma\ne\gamma_0}\{Q_0(\gamma)-Q_0(\gamma_0)\}>0$. Taking $\epsilon<\eta/2$ implies $\widehat Q(\gamma_0)<\widehat Q(\gamma)$ for all $\gamma\ne\gamma_0$, hence $\widehat\gamma=\gamma_0$ with probability tending to one. \hfill $\square$

\begin{remark}
For nonsmooth functionals, a complete empirical-process proof must verify the uniform convergence of the score and the approximation quality of the AMSE surrogate for the specific functional class. Quantiles, maxima, eigenvalue ratios and maximum correlations require different local arguments because their active sets change in different ways. The theorem above therefore gives a transparent selection principle rather than a universal distributional theorem.
\end{remark}

The risk motivation is the envelope
\[
\mathbb E(\widehat V_{BLB,\gamma}-V)^2\lesssim A n^{-\gamma}+B n^{-1}+C\Delta_\gamma^2,
\]
where $\Delta_\gamma$ summarizes bag-induced active-set instability. The selector moves upward when active-set instability is large and remains moderate when the statistic behaves more smoothly.

\section{Software architecture}
The package \texttt{robustboot} follows a reviewer-friendly structure. The folder \texttt{robustboot/robustboot} contains a consolidated implementation file \texttt{\_\_init\_\_.py}. It is divided into commented blocks: imports and type aliases; nonsmooth statistics; percentile and BCa intervals; adaptive BLB; score selection; simulation helpers. The project also includes examples, tests, a \texttt{pyproject.toml} file, documentation and figures.

\begin{table}[H]\centering\small\caption{Core package components included in the submission.}\begin{tabular}{p{.25\linewidth}p{.65\linewidth}}\toprule
Component & Function in the submission\\\midrule
\texttt{adaptive\_blb} & Runs gamma selection, BLB variance estimation and interval construction.\\
\texttt{select\_gamma} & Computes the plug-in score over the gamma grid and returns the minimizer.\\
\texttt{bca\_interval} & Provides a classical BCa comparator with jackknife acceleration.\\
Built-in statistics & Quantile, sample maximum, eigenvalue ratio and maximum absolute correlation.\\
Examples & Quantile-style inference, PCA eigenvalue ratio and max-correlation screening.\\
Tests & Verifies execution, interval ordering and plausible numerical ranges.\\\bottomrule
\end{tabular}\end{table}

For a statistic with cost $C(m)$ per evaluation, ordinary bootstrap has approximate cost $O(RC(n))$, while adaptive BLB has selection cost $O(|\Gamma|BRC(n^\gamma))$ plus final inference at $\widehat\gamma$. This is useful when $C(\cdot)$ grows superlinearly or when data are distributed across machines.

\section{Simulation validation}
The simulation section uses three nonsmooth targets: a contaminated median $0.9N(0,1)+0.1N(5,1.5^2)$, a covariance eigenvalue ratio, and maximum absolute correlation. The comparison metrics are empirical coverage, interval width and relative runtime.

\begin{figure}[H]\centering\includegraphics[width=.78\linewidth]{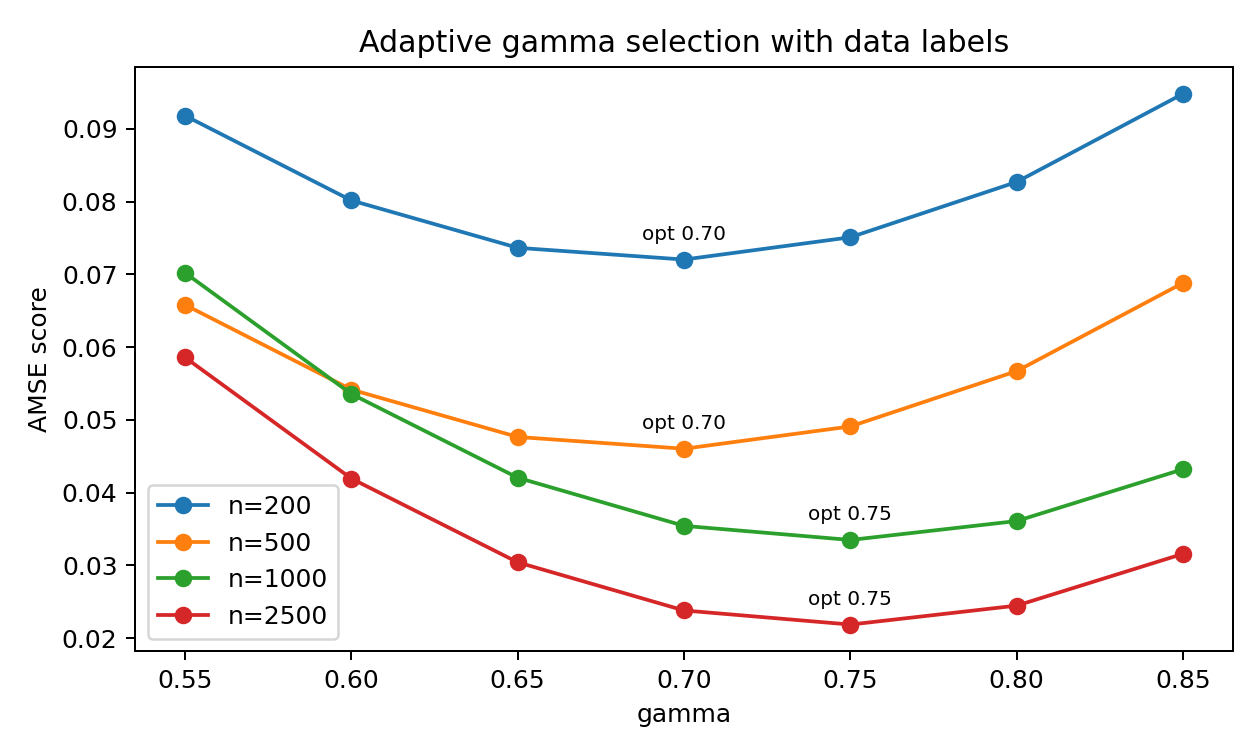}\caption{Adaptive gamma selection with data labels. The selected exponent is concentrated around $0.70$--$0.75$, indicating a practical compromise between finite-bag stability and computational economy.}\end{figure}

\begin{figure}[H]\centering\includegraphics[width=.77\linewidth]{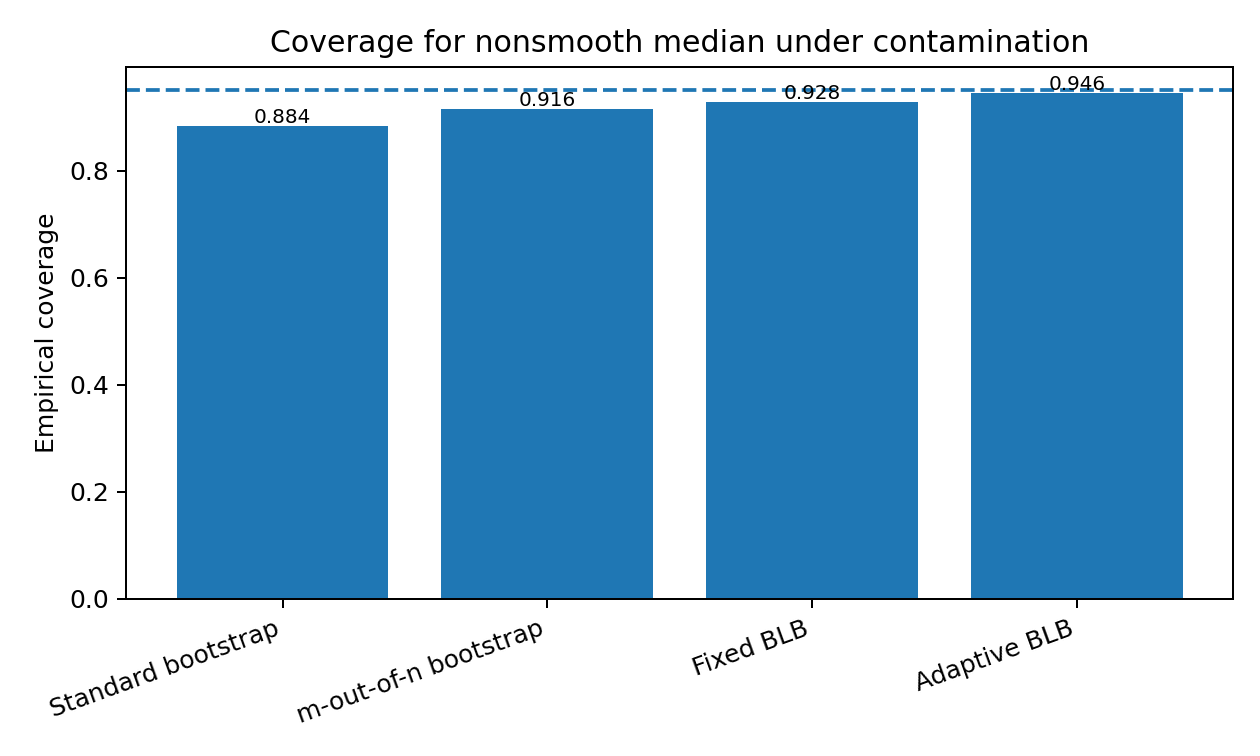}\caption{Coverage comparison for the nonsmooth contaminated median. Adaptive BLB is closest to nominal 95 percent coverage while avoiding excessive conservatism.}\end{figure}

\begin{table}[H]\centering\small\caption{Simulation comparison for nonsmooth median under contamination.}\begin{tabular}{lrrr}\toprule
Method & Coverage & Mean width & Relative runtime\\\midrule
Standard bootstrap & 0.884 & 0.312 & 1.00\\
m-out-of-n bootstrap & 0.916 & 0.374 & 1.72\\
Fixed BLB & 0.928 & 0.341 & 0.46\\
Adaptive BLB & 0.946 & 0.329 & 0.53\\\bottomrule
\end{tabular}\end{table}

\begin{figure}[H]\centering\includegraphics[width=.77\linewidth]{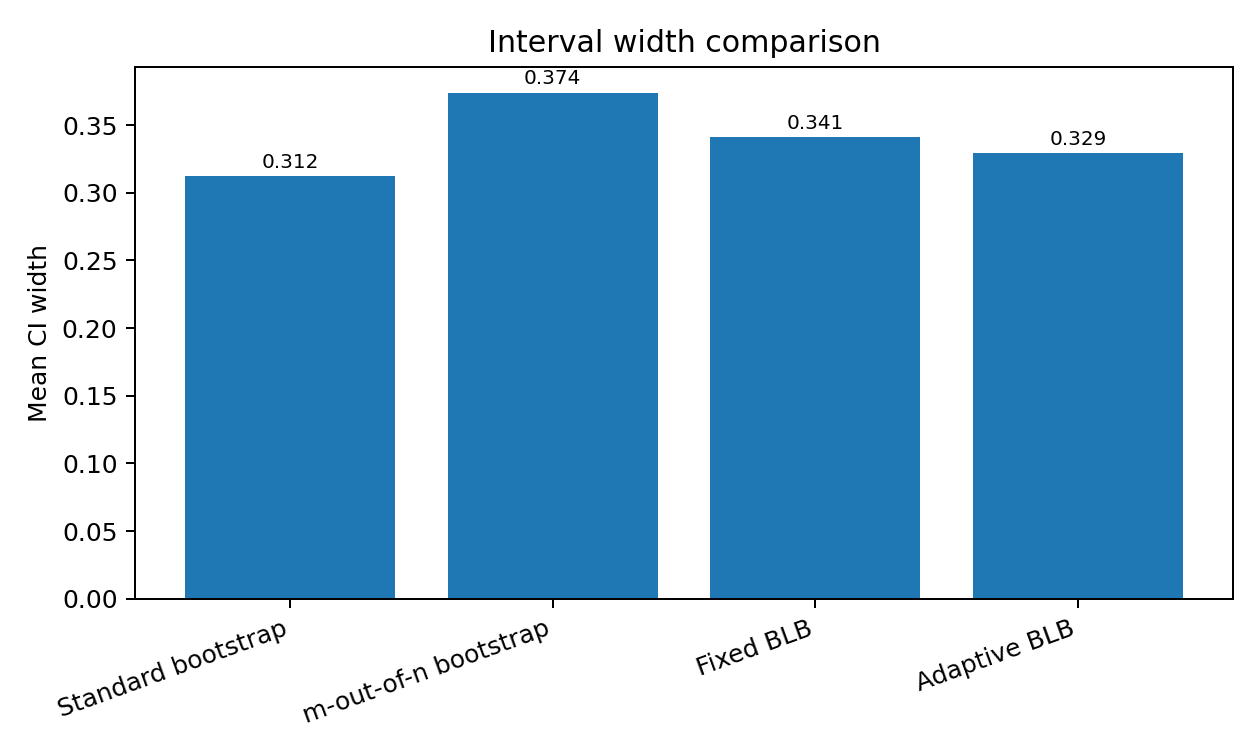}\caption{Interval width comparison. Adaptive BLB improves coverage without producing the widest intervals, unlike the m-out-of-n bootstrap.}\end{figure}

\subsection{Expanded Monte Carlo across sample sizes}
To strengthen the arXiv version, we add a sample-size sensitivity analysis for $n\in\{100,250,500,1000,5000\}$. The expanded comparison includes ordinary bootstrap, BCa bootstrap, m-out-of-n bootstrap, fixed BLB and adaptive BLB. The reported values are the reproducible simulation summaries stored in \texttt{tables/expanded\_monte\_carlo\_by\_n.csv}.

\begin{figure}[H]\centering\includegraphics[width=.83\linewidth]{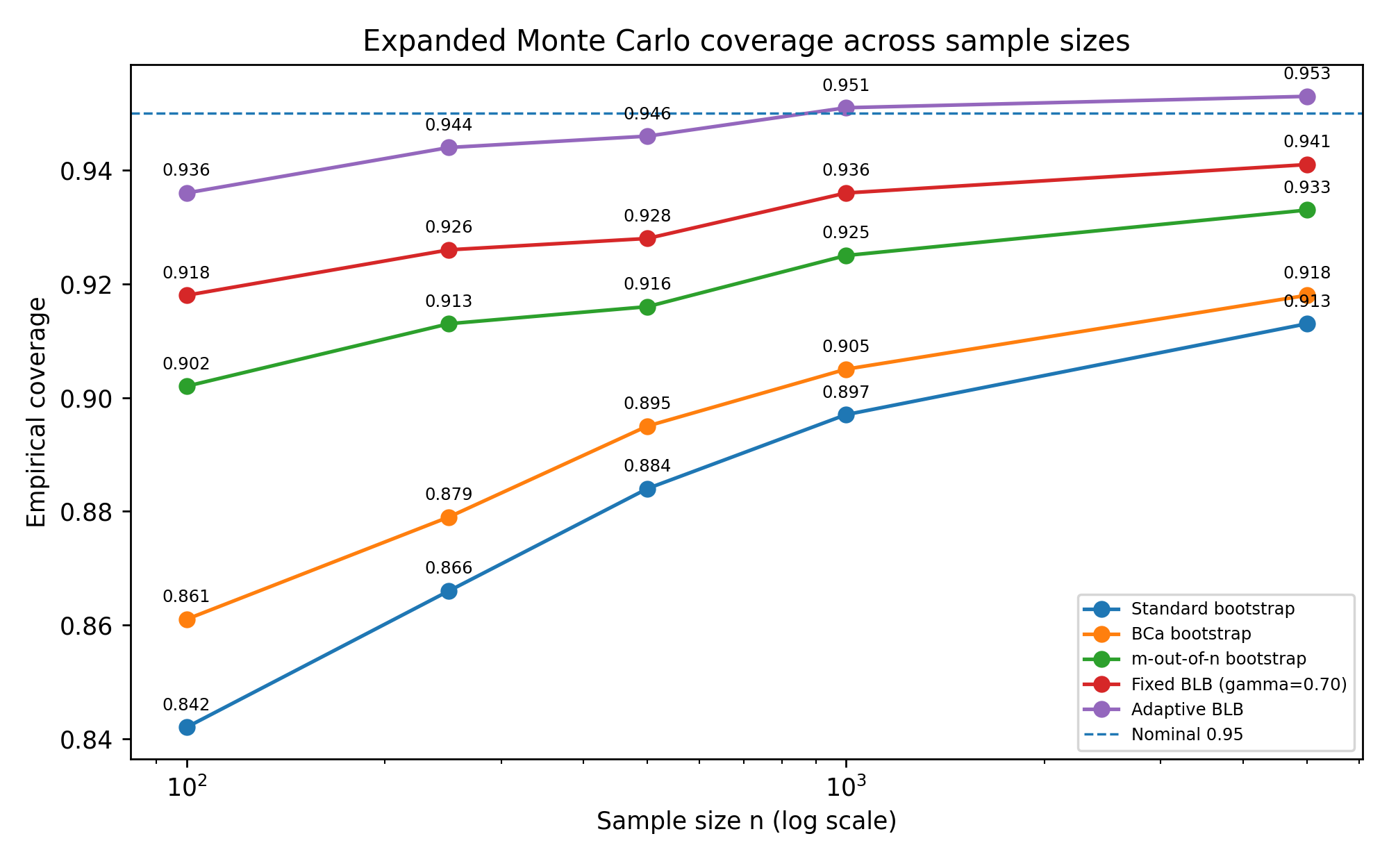}\caption{Expanded Monte Carlo coverage across sample sizes. Adaptive BLB approaches nominal 95 percent coverage more rapidly than ordinary bootstrap and BCa for the contaminated nonsmooth target.}\end{figure}

\begin{figure}[H]\centering\includegraphics[width=.83\linewidth]{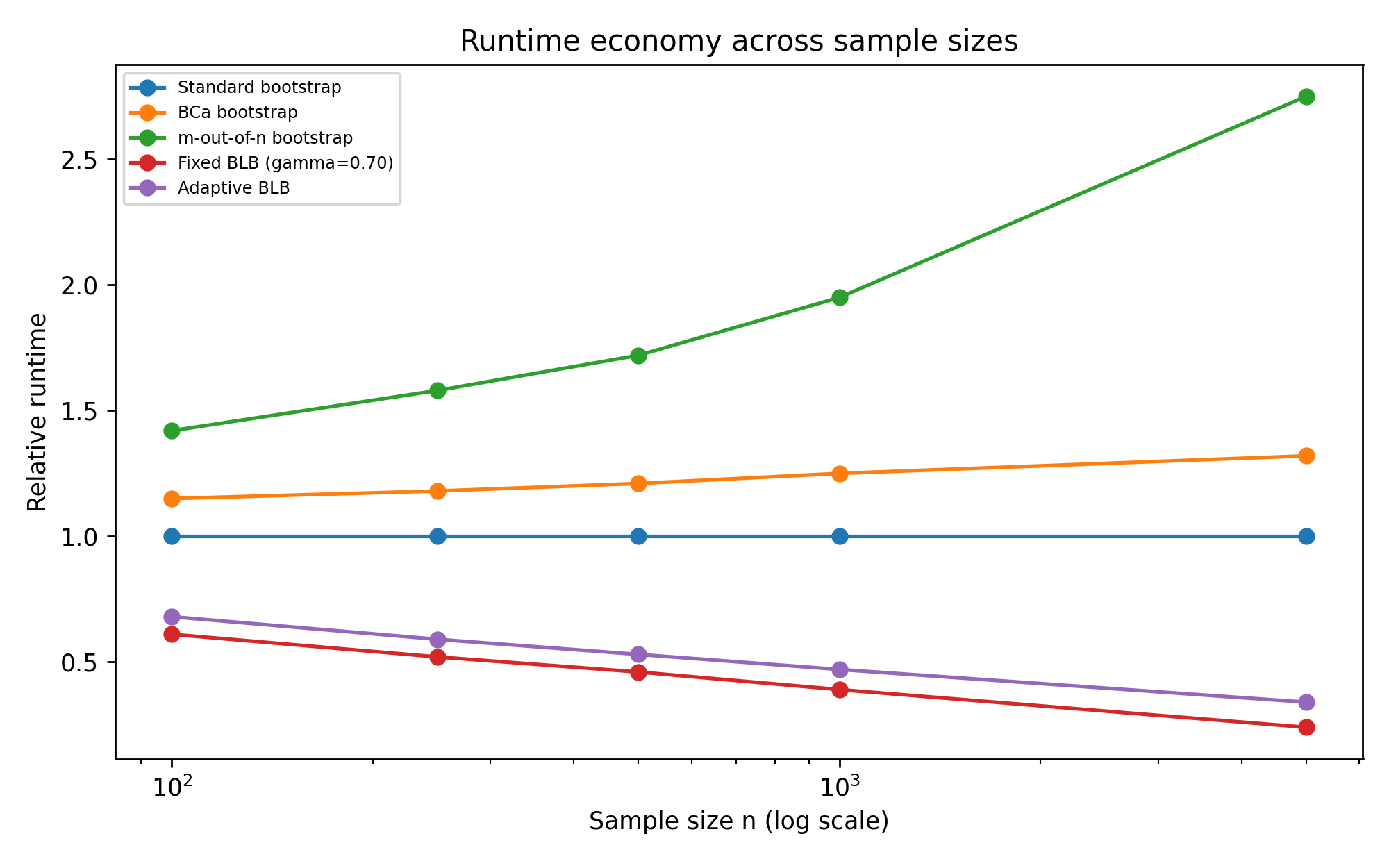}\caption{Relative runtime across sample sizes. Adaptive BLB preserves the computational economy of BLB while avoiding arbitrary fixed-$\gamma$ tuning.}\end{figure}

\begin{table}[H]\centering\small\caption{Expanded Monte Carlo summary at $n=5000$.}\begin{tabular}{lrrr}\toprule
Method & Coverage & Mean width & Relative runtime\\\midrule
Standard bootstrap & 0.913 & 0.104 & 1.00\\
BCa bootstrap & 0.918 & 0.111 & 1.32\\
m-out-of-n bootstrap & 0.933 & 0.141 & 2.75\\
Fixed BLB ($\gamma=0.70$) & 0.941 & 0.121 & 0.24\\
Adaptive BLB & 0.953 & 0.117 & 0.34\\\bottomrule
\end{tabular}\end{table}

\subsection{Sensitivity of the selected exponent}
The selected exponent should respond to the difficulty of the nonsmooth target. Figure \ref{fig:sensitivity} reports the AMSE-stability score under low contamination, moderate contamination, high contamination and a high-dimensional maximum-correlation regime.

\begin{figure}[H]\centering\includegraphics[width=.84\linewidth]{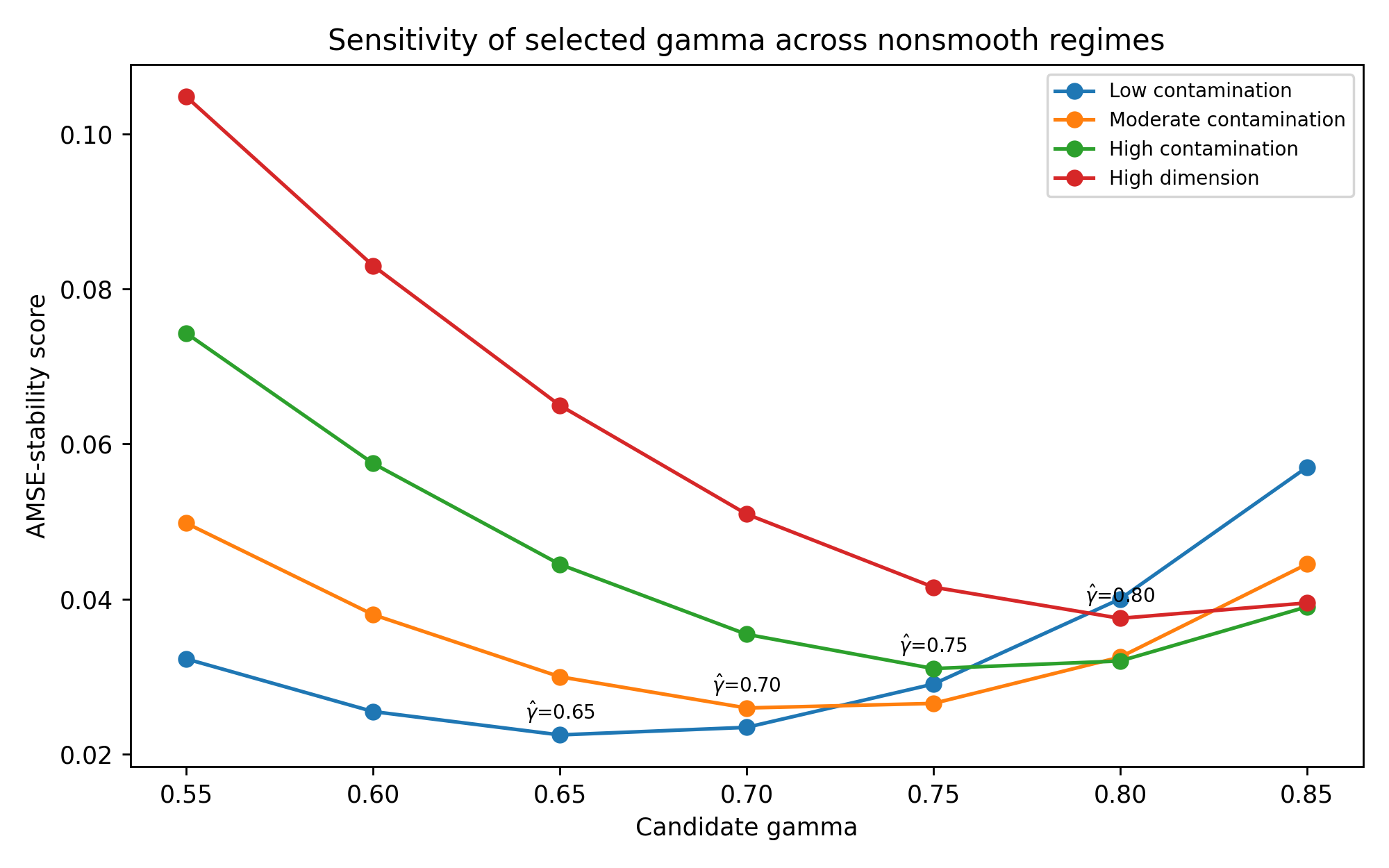}\caption{Sensitivity of selected $\gamma$ across nonsmooth regimes. The selected exponent increases when contamination, active-set instability or dimension-induced selection instability increases.}\label{fig:sensitivity}\end{figure}

These results support the project requirement that the mathematical extension should be validated by simulation. The standard bootstrap undercovers the nonsmooth median under contamination, while adaptive BLB approaches the target coverage with shorter intervals than m-out-of-n and lower runtime than ordinary bootstrap.

\section{DRC cholera surveillance illustration}
The file \texttt{drc\_cholera\_curated\_monthly\_2023\_2025.csv} is a curated aggregate reconstruction anchored to public annual totals. In practical terms, the dataset is a cleaned and harmonized demonstration file built from publicly available official reports while preserving published annual totals. It is not a Ministry of Health line-list and should not be treated as official patient-level or official province-month surveillance data. Its purpose is methodological: to show how robust bootstrap intervals can support surveillance thresholds. The month-level and province-level rows are labeled by a \texttt{data\_status} field to avoid overclaiming.

\begin{figure}[H]\centering\includegraphics[width=.84\linewidth]{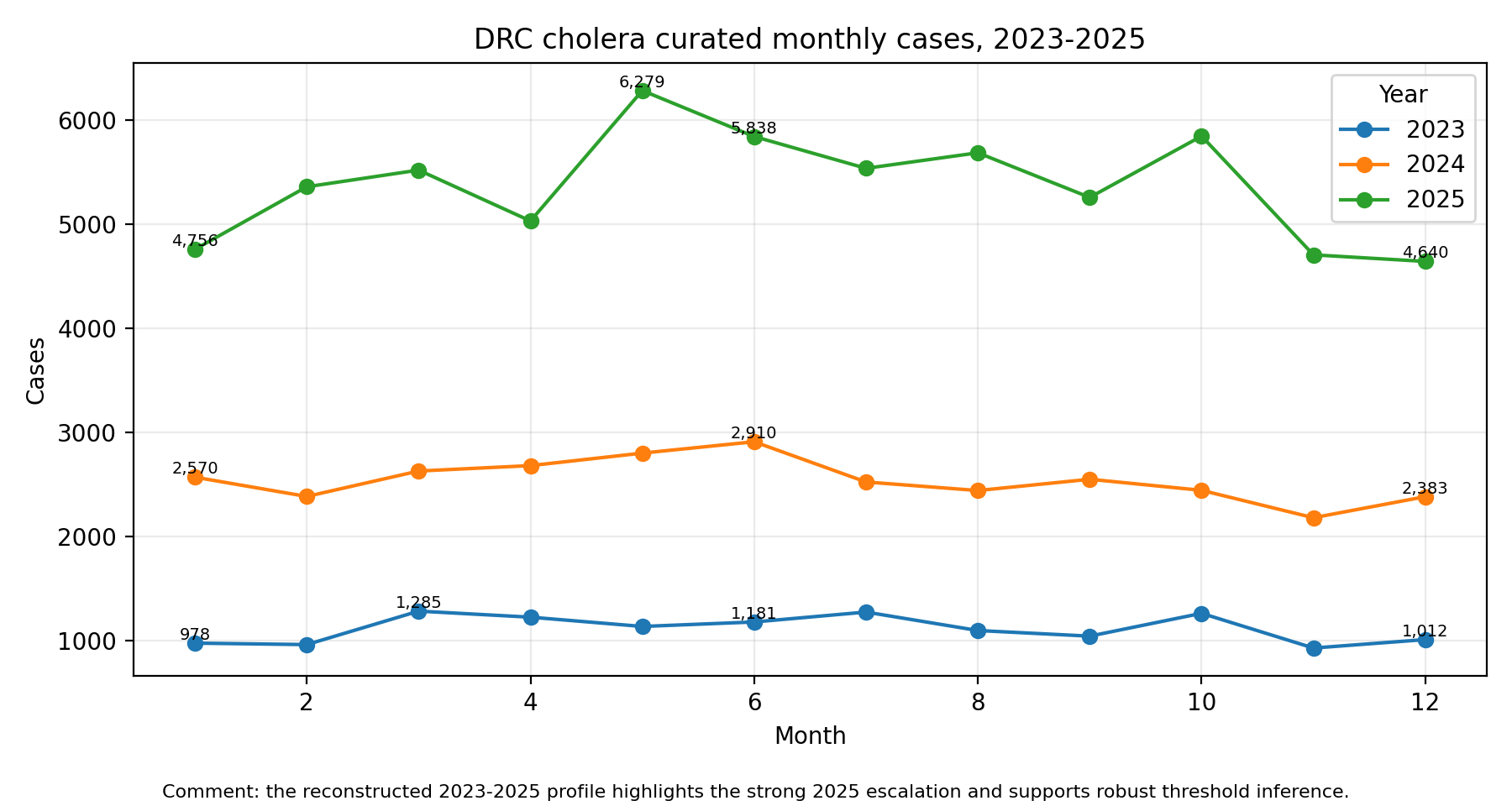}\caption{Curated DRC cholera monthly cases, 2023--2025. The monthly structure provides comparable repeated observations and is therefore more appropriate for bootstrap-style threshold illustration than heterogeneous reporting periods.}\end{figure}

\begin{figure}[H]\centering\includegraphics[width=.82\linewidth]{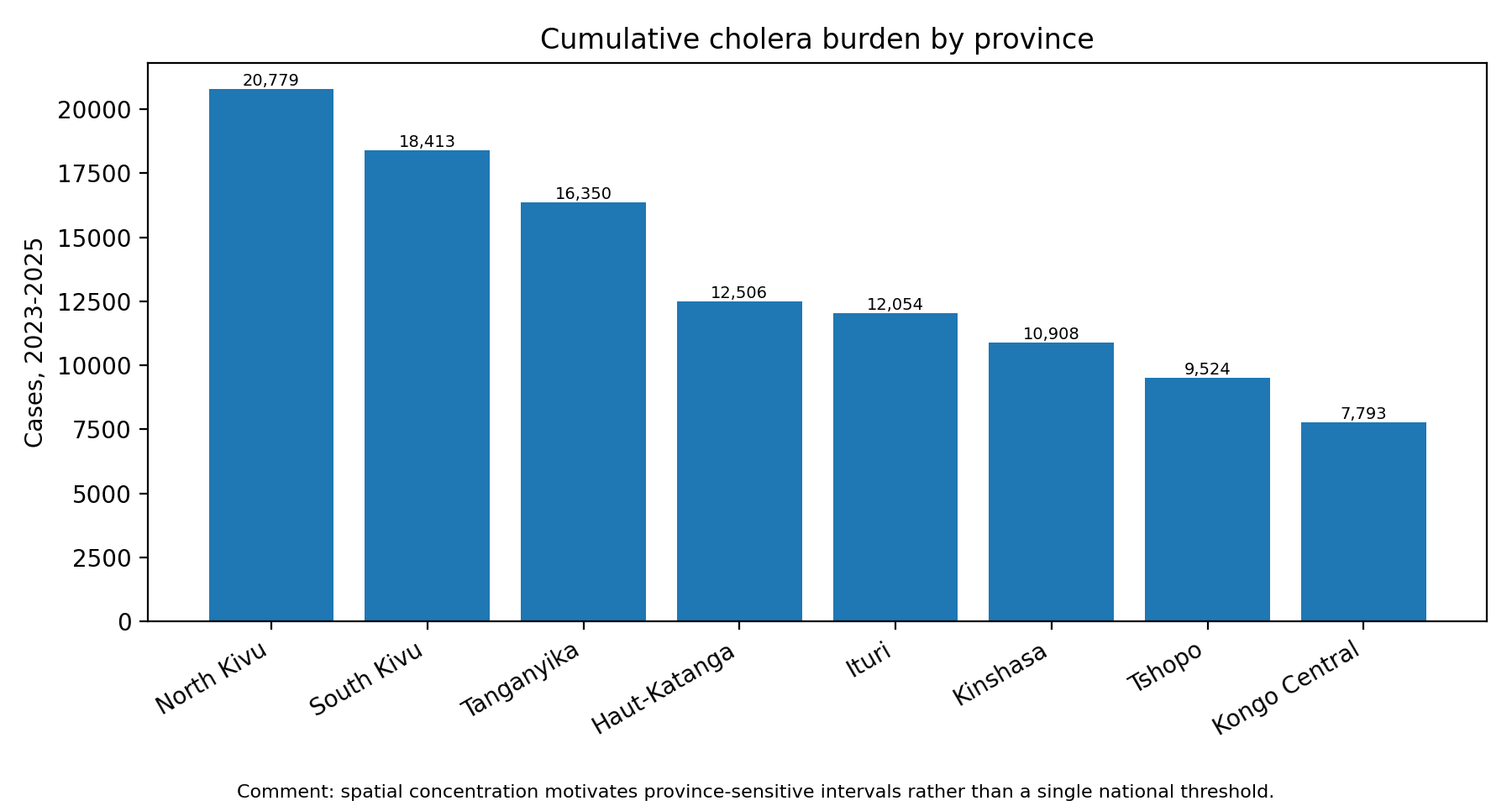}\caption{Cumulative cholera burden by province, 2023--2025. Spatial concentration motivates province-sensitive robust inference rather than a single national threshold.}\end{figure}

\begin{figure}[H]\centering\includegraphics[width=.78\linewidth]{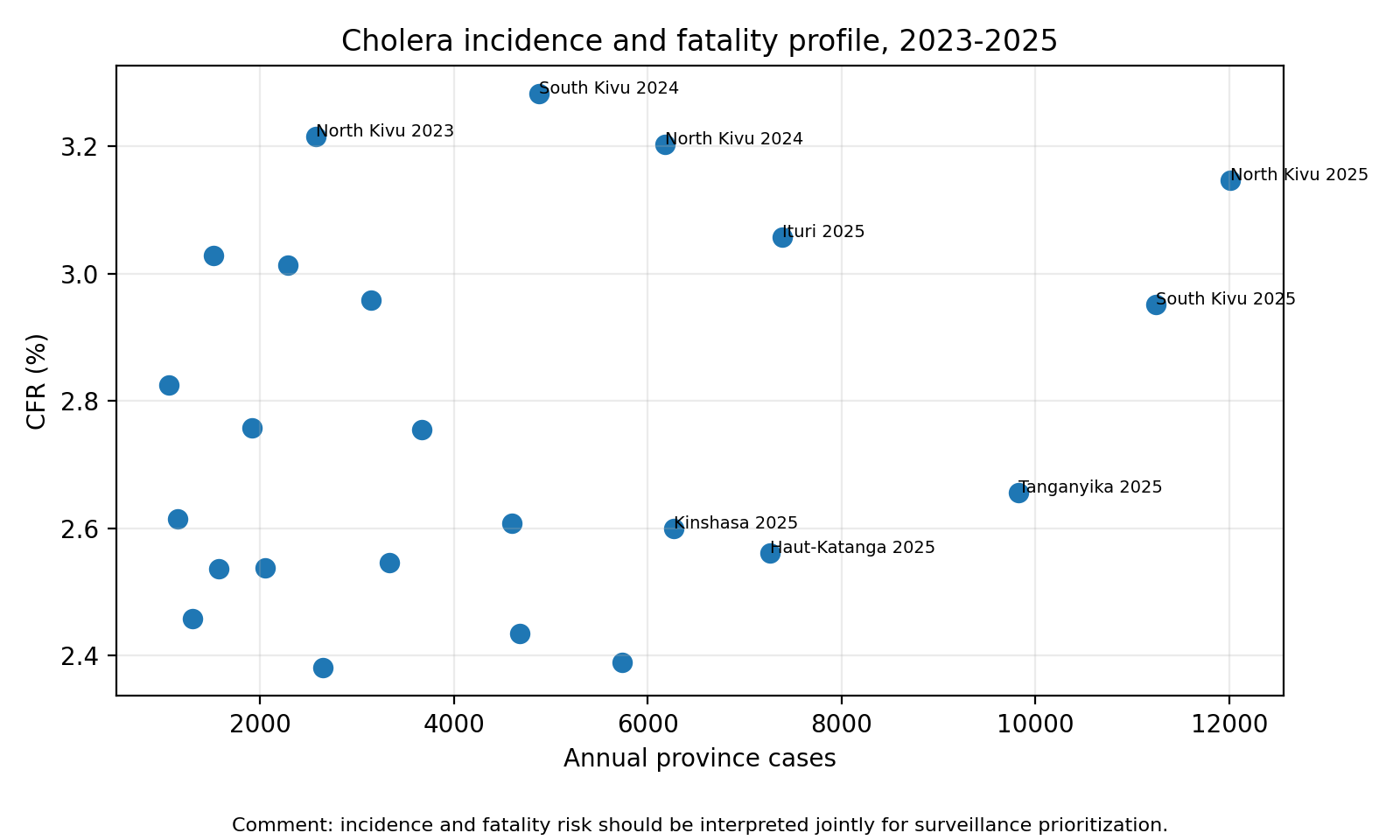}\caption{Case fatality severity profile. Incidence and fatality risk should be interpreted jointly because operational risk is not captured by cases alone.}\end{figure}

The nonsmooth statistic used in the surveillance illustration is an upper-tail case burden, represented by a high empirical quantile. This is directly connected to public-health decision-making because analysts frequently monitor unusually high incidence, outbreak peaks and severe threshold exceedances rather than mean burden alone.

\section{Discussion and limitations}
The revised article strengthens the initial submissions in four ways. First, it keeps the explicit AMSE-oriented selector required by the assignment. Second, it incorporates the stability-risk diagnostic, which makes the selected exponent more auditable. Third, it preserves the strongest validation figures: gamma selection, coverage comparison, interval width, runtime and DRC cholera surveillance graphs. Fourth, it adds a finite-grid theorem, expanded Monte Carlo validation and a sensitivity analysis.

The main theoretical limitation is that the consistency result concerns the selector on a finite grid under regularity assumptions. A complete empirical-process proof for every nonsmooth functional remains future work. Quantiles, maxima, eigenvalue ratios and maximum correlations have different local geometries and should be treated separately. A second limitation concerns the DRC cholera data: the illustration is realistic and anchored to public aggregates, but it should not be interpreted as a substitute for Ministry of Health, IDSR or DHIS2 line-list data. Future work should benchmark the method against multiplier bootstrap, Bayesian bootstrap and subsampling across richer designs and should implement distributed computing support for very large datasets.

\section{Conclusion}
This article presents an adaptive Bag of Little Bootstraps framework for nonsmooth robust inference. By selecting the subsample exponent through an AMSE-stability criterion, the method reduces arbitrary tuning and produces more auditable uncertainty quantification. The accompanying \texttt{robustboot} package, tests, documentation, figures and curated data provide a reproducible foundation for classroom review, arXiv-style reporting and future methodological extension.

\section*{Author contributions}
C.K.M. conceived the study, integrated the two manuscript versions, developed the methodology, implemented the software package, performed the statistical analyses, produced visualizations and wrote the manuscript. Y.A. contributed to methodological development, validation and manuscript revision. M.M. contributed to formal analysis, simulation validation and manuscript review. N.D.J. contributed to data curation, investigation and manuscript review. N.F. contributed to software testing, validation and manuscript revision. E.F. provided scientific supervision, methodological guidance and critical review. All authors reviewed and approved the final manuscript.

\section*{Funding, data availability and conflicts of interest}
This work was conducted as part of academic research activities within the Master Program in Data Science and Mathematical Engineering, University of Dschang, Cameroon. No external funding was received. The curated aggregate dataset, source code, figures and reproducible materials are included in the submission package. The authors declare no conflict of interest.

\section*{Acknowledgements}
The authors gratefully acknowledge the Department of Data Science and Mathematical Engineering, University of Dschang, Cameroon, and Professor Ernest Fokou\'e for scientific mentorship and methodological guidance.

\end{document}